\pdfoutput=1
\documentclass[12pt]{article}
\usepackage[a4paper, top=2.5cm, bottom=2.5cm, left=2.2cm, right=2.2cm]
{geometry}
\usepackage{parskip}

\usepackage{mathtools,amssymb,amsthm,bm}
\usepackage{physics}

\usepackage{graphicx}
\usepackage{caption, subcaption}
\usepackage{float}
\usepackage[dvipsnames]{xcolor}
\usepackage{tikz}
\usepackage{tikz-cd}
\usepackage[most]{tcolorbox}

\tikzset{
    plumbing vertex/.style={
        circle,
        fill,
        draw,
        inner sep=1.5pt
    }
}

\usepackage{array}
\usepackage{enumitem}
\usepackage{comment}
\usepackage{authblk} 
\usepackage[titletoc,title]{appendix}

\usepackage{microtype}

\usepackage[backend=biber, style=numeric-comp, sorting=none]{biblatex}
\usepackage{hyperref}
\hypersetup{hidelinks}

\newtheorem{theorem}{Theorem}

\newtheorem{definition}{Definition}

\newcommand{\Z}{{\mathbb Z}}
\newcommand{\R}{{\mathbb R}}
\newcommand{\C}{{\mathbb C}}
\newcommand{\Q}{{\mathbb Q}}

\def\tilde{\widetilde}
\renewcommand{\hat}{\widehat}

\def\be{\begin{equation}}
\def\ee{\end{equation}}
\def\bea{\begin{align}}
\def\eea{\end{align}}

\title{Mock Modularity, Resurgence, and \\ Dual False Theta Functions}
\author{Mrunmay Jagadale}
\date{\vspace{-0.6cm}}
\affil{California Institute of Technology, Pasadena, CA 91125, USA}

\begin{document}

\maketitle

\begin{abstract}
\noindent The $\hat{Z}$ invariant of a three-manifold $M_{3}$, a half-index of the $3d$ $\mathcal{N}=2$ theory $T[M_{3}]$, is well understood for negative-definite plumbed three-manifolds. For orientation-reversed manifolds, however, the known formulas fail to produce $q$-series, and making sense of the operation $q \rightarrow q^{-1}$ remains a central open problem. Seifert manifolds, whose $\hat{Z}$ invariants are built from false theta functions, offer a natural testing ground. We study ``dual false theta functions'', the objects that should replace false theta functions under orientation reversal. Two characterizations of these duals exist in the literature: a modular one, based on mock modularity with a prescribed shadow, and a resurgent one, based on Borel--Mordell integrals and transseries. We prove that the two characterizations are equivalent. We construct a new family of dual false theta functions for values of positive integer $p$ satisfying a Pell-equation condition. Its members are built from Zwegers' indefinite theta functions, have integer coefficients, and have effective central charge $c_{\text{eff}} \leq 1$. We also show that the mock theta functions of Li and Schwagenscheidt satisfy these characterizations and arise from a natural regularization of the divergent product $\eta^{3}(\tau)\psi_{p, a}(- \tau)$.
\end{abstract}

\tableofcontents

%%%%%%%%%%%%%%%%%%%%%%%%%%%%%%%%%%%%%%%

\section{Introduction}

The $3d$-$3d$ correspondence has revealed a rich interplay between three-manifold topology, supersymmetric quantum field theory, and number theory. To a three-manifold $M_{3}$, it associates a $3d$ $\mathcal{N}=2$ theory $T[M_{3}]$, obtained by compactifying the $6d$ $(2,0)$ theory on $M_{3}$ \cite{Dimofte:2011ju}. Supersymmetry-protected quantities of $T[M_3]$ then give rise to rich topological invariants of $M_3$. 

A particularly interesting example is the $q$-series-valued invariant $\hat{Z}$ \cite{Gukov:2017kmk, Gukov:2016gkn}. It is defined as a supersymmetric partition function of $T[M_{3}]$ on $S^{1}\times_{q}D^{2}$ with suitable boundary conditions \cite{Gadde:2013wq}. It has connections to diverse areas of mathematics and physics, including resurgence, quantum topology, and quantum modularity \cite{Gukov:2016njj, Cheng:2018vpl}. 

In \cite{Gukov:2017kmk}, Gukov, Pei, Putrov, and Vafa gave a formula for the $\hat{Z}$ invariant of negative-definite plumbed manifolds. For a plumbed manifold with opposite orientation, that is, for a positive-definite plumbed manifold, this formula does not yield a $q$-series with exponents bounded from below. From the physical definition, however, we expect $\hat{Z}$ to be well-defined for manifolds of both orientations. 

The $3d$-$3d$ correspondence relates the $\hat{Z}(q)$ invariant to the partition function of $\mathrm{SL}(2,\C)$ Chern--Simons theory, where $q$ is related to the Chern--Simons level $k$ by $q= e^{\frac{2 \pi i}{k}}$. Under orientation reversal, the sign of the Chern--Simons action is reversed, naturally leading to $q \rightarrow q^{-1}$. However, under $q \rightarrow q^{-1}$, a $q$-series with exponents bounded from below is mapped to a $q^{-1}$-series with exponents bounded from above. Therefore, we need additional structure to reinterpret the result as a $q$-series with exponents bounded from below. Understanding this $q$ to $q^{-1}$ operation has been the subject of considerable recent work \cite{Cheng:2019uzc, Park:2021ufu, Costin:2023kla, Cheng:2024vou,Harichurn:2025suf, Adams:2025aad, Adams:2025qgj, Fantini:2026ale, Fantini:2025wap}.

A particularly useful setting in which to study this problem is provided by Seifert manifolds. Their $\hat{Z}$ invariants are linear combinations of false theta functions \cite{Gukov:2019mnk}, which therefore furnish a natural class of basic examples on which to understand the $q \rightarrow q^{-1}$ operation. We denote the false theta functions by $\psi_{p, a}$, with $p \in \Z_{>0}$ and $a \in \Z_{2p}$.

Following \cite{Costin:2023kla}, we refer to the $q$-series obtained by making sense of the $q \rightarrow q^{-1}$ transformation on a false theta function as a dual false theta function and denote it by $\psi_{p,a}^{\vee}$. There are two seemingly different approaches to constructing such duals in the literature, both of which we will discuss in this paper. The first is based on the quantum modular properties of false theta functions, in which the dual false theta function is characterized by requiring appropriate modular transformation properties \cite{Cheng:2019uzc}. The second is based on the resurgent structure of false theta functions and determines the dual by solving a set of modular bootstrap-like equations \cite{Costin:2023kla}. In Theorem \ref{equivalenceofbridges}, we show that the two approaches are equivalent. In a closely related work \cite{Costin:2026new}, Costin, Dunne, and Saraeb develop a complementary perspective. They study mock theta functions through a resurgent transseries approach built on Mordell--Appell integrals. They also show that two constructions for crossing the natural boundary $|q|=1$, one via modular functional equations and one via Ecalle--Borel summation after $\tau \rightarrow- \tau$, coincide. This is in the same spirit as the equivalence of the modular and resurgent bridges established in Theorem \ref{equivalenceofbridges}.

In \cite{li2022mock}, Li and Schwagenscheidt construct a mock theta function whose modular transformation properties agree with those required of the dual false theta function. Thus, from the perspective of quantum modularity, their mock theta function furnishes a natural candidate for the dual. In Section \ref{dualfalseformulas}, we show that the mock theta functions constructed by Li and Schwagenscheidt can be obtained by regularizing the product $\psi_{p, a}(-\tau) \eta^{3}(\tau)$ and then dividing by $\eta^{3}(\tau)$, where $\tau$ is given by $q = e^{2 \pi i \tau}$. 

In the same Section \ref{dualfalseformulas}, we construct a dual false theta function $\psi^{\vee, \eta}$ for certain values of $p$ and show that it satisfies the modular characterization. It is given by the following simple formula,
$$\frac{-1}{\eta(\tau)} \sum_{\substack{x \in a + 2 p \Z \\ y \in 1+2 \Z \\ 0 < \frac{x}{y} < \frac{p v}{u} }} \mathrm{Sign}(x) \left( \frac{12}{y} \right) q^{\frac{y^{2}}{24 } - \frac{x^{2}  }{4p} }. $$
See the description of $u$, $v$  and $\left( \frac{12}{y} \right)$ in the discussion above \eqref{sparthdualfalsedef}. In Subsection \ref{examplesofdualfalse}, we give examples of dual false theta functions from the two constructions above and compare them with the $q$-series appearing in the literature. In Subsection \ref{nonuniquenessdiscussio}, we discuss the non-uniqueness of dual false theta functions and possible criteria to single out a unique dual. 

%%%%%%%%%%%%%%%%%%%%%%%%%%%%%%%%%%%%%%%

\section{Bridges to the Other Side}\label{bridges}
In this section, we summarize and review two approaches to making sense of the $q \rightarrow q^{-1}$ operation on $q$-series and the background required. 

%%%%%%%%%%%%%%%%%%%%%%%%%%%%%%%%%%%%%%%

\subsection{The Modular Bridge: Mock Modularity and Weil Representation}\label{modularitybridge}

The modular bridge characterizes the dual false theta functions by the modular properties they should satisfy. We begin by reviewing the necessary background: Weil representations, unary theta functions, false theta functions, and mock modular forms. 

Let $\mathrm{Mp}(2,\Z)$ be the metaplectic two-fold cover of $\mathrm{SL}(2,\Z)$ whose elements are pairs $(\gamma, v)$ where $\gamma = \begin{pmatrix}
    a & b \\ c & d
\end{pmatrix} \in \mathrm{SL}(2,\Z)$, and $v : \mathbb{H} \rightarrow \C $ is a holomorphic function satisfying $v(\tau)^2 = (c \tau + d )$. It is generated by 
\begin{align}
    S := & \left(\begin{pmatrix}
    0 & -1 \\ 1 & 0
\end{pmatrix}, \sqrt{\tau}\right) & T:= \left( \begin{pmatrix}
    1 & 1 \\ 0 & 1
\end{pmatrix},1 \right).
\end{align}
To an even lattice one associates a Weil representation of $\mathrm{Mp}(2,\Z)$. In this article, we will focus on a one-dimensional lattice $(\Z, 2p)$. The group ring $\C[\Z_{2p}]$ is a $\mathrm{Mp}(2,\Z)$-module via the Weil representation $\rho_{p}$, which is given by
\begin{align}
    \rho_{p}(T)_{a,b}:= &  \delta_{a,b} e^{2 \pi i \tfrac{a^{2}}{4p}} & \rho_{p}(S)_{a,b}:= \frac{e^{- \tfrac{i \pi}{4} \mathrm{Sign}(p)}}{\sqrt{|2p|}}e^{-i \pi \tfrac{a b}{p} },
\end{align}
where $a, b \in \Z_{2p}$. 

A real-analytic function $f: \mathbb{H} \rightarrow \C[\Z_{2p}] $ is called \textit{modular} of weight $k \in \frac{1}{2}\Z$ with respect to $\rho_{p}$ if 
\be 
(f\vert_{k, p}(\gamma, v))(\tau) = v(\tau)^{-2k} \rho_{p}^{-1}((\gamma,v))f(\gamma \cdot \tau)= f(\tau)
\ee 
for all $(\gamma,v) \in \mathrm{Mp}(2,\Z)$ and $\tau \in \mathbb{H}$. We denote by $M_{k,p}$, $\overline{M}_{k,p}$, $M_{k,p}^{!}$, and $\hat{M}_{k,p}$ the spaces of holomorphic, anti-holomorphic, weakly holomorphic modular forms, and weak harmonic Maass forms of weight $k$ with respect to $\rho_{p}$, respectively. Weakly holomorphic functions are allowed to have exponential growth at $\Q \cup \{\infty\}$. Weak harmonic Maass forms are real analytic modular forms of at most exponential growth at $\Q \cup \{\infty\}$ which are eigenfunctions of the Laplace operator. For details, see \cite{zagier2009ramanujan}. 

The antilinear differential operator $ \xi_{k} = 2i \mathrm{Im}(\tau)^{k} \frac{\overline{\partial}}{\partial \bar{\tau}} $ gives us a surjective map from $\hat{M}_{k,p}$ to $M_{2-k,-p}^{!}$. For a weight $2-k$ modular form $g$ we define $g^{*}$ as the following integral,
\be 
g^{*}(\tau) := -(2i)^{k-1} \int_{- \bar{\tau}}^{\infty} (z+ \tau)^{-k} \overline{g(-\bar{z})} \dd z.
\ee 
The function $g^{*}$ is such that $\xi_{k}g^{*} = g$. Therefore, we can extract the holomorphic part of a weak harmonic Maass form $\hat{f}$ by subtracting $ (\xi_{k} \hat{f})^{*} $ from it.

\begin{definition}
    A weakly holomorphic function $f^{+}:\mathbb{H} \rightarrow \C[\Z_{2p}]  $ is called a \textbf{vector-valued mock modular form} of weight $k \in \frac{1}{2}\Z$ with respect to $\rho_{p}$ and with shadow $g$ if it is the holomorphic part of a harmonic Maass form $\hat{f}$ of weight $k$ and with respect to $\rho_{p}$ which satisfies $g = \xi_{k}\hat{f}$. 
\end{definition}
\noindent We denote the space of mock modular forms of weight $k$ with respect to $\rho_{p}$ by $M_{k,p}^{+}$.
\begin{table}
    \centering
    \begin{tabular}{|c|c|}
    \hline
         $M_{k,p}^{!}$& Weakly holomorphic forms  \\
         $\hat{M}_{k,p}$& Weak harmonic Maass forms \\
         $M_{k,p}^{+}$ & Mock modular forms \\
         $M_{k,p}$ & Holomorphic forms\\
         $\overline{M}_{k,p}$ &  Anti-holomorphic forms\\
         \hline
    \end{tabular}
    \caption{Notation for spaces of modular functions.}
    \label{tab:modularfunctionnotation}
\end{table}
Table \ref{tab:modularfunctionnotation} summarizes the notation for the spaces of modular functions. The following commutative diagram, adapted from \cite{zagier2009ramanujan}, summarizes the relations among these spaces.
\be \label{commutativediagmocktheta}
\begin{tikzcd}[column sep=large]
    0 \arrow[r] & M_{k,p}^{!} \arrow[r,"\subset"] \arrow[d,"="] & \hat{M}_{k,p} \arrow[r, "\mathrm{Im}(\tau)^{k} \frac{\partial}{\partial \bar{\tau}}"  ] \arrow[d, "\cong"]& \overline{M}_{2-k,-p} \arrow[r] \arrow[d, "\text{c.c.}"', "\cong" ] & 0 \\
      0 \arrow[r] & M_{k,p}^{!} \arrow[r, "\subset"] & M_{k,p}^{+}  \arrow[r, "\text{Shadow}", ] & M_{2-k,-p}\arrow[r] &0
\end{tikzcd}
\ee 
Here c.c. denotes complex conjugation, and the isomorphism between $\hat{M}_{k,p}$ and $M_{k,p}^{+}$ is given by extracting the holomorphic part of a weak harmonic Maass form. 

For a positive integer $p$ and $a \in \Z_{2p}$, the weight $\frac{3}{2}$ unary theta functions $\theta^{1}_{p,a}\in M_{\frac{3}{2},p}$ are given by
\be 
\theta^{1}_{p,a}(\tau) = \sum_{n \in a + 2 p \Z} n q^{\frac{n^{2}}{4p}}.
\ee 
The false theta functions are the Eichler integrals of $\theta^{1}_{p,a}$, 
\be 
\psi_{p,a}(\tau) = \frac{1}{\sqrt{2 p \, i}}\int_{\tau}^{i \infty} \frac{\theta^{1}_{p,a}(z)}{\sqrt{ (z - \tau)}} \dd z.
\ee 
We can express the false theta functions as a $q$-series, with $q= e^{2 \pi i \tau}$ as follows, 
\be 
\psi_{p,a}(\tau) = \sum_{ n \in a + 2 p \Z }\mathrm{Sign}(n) q^{\frac{n^{2}}{4p}}.
\ee 
For more details on the modularity of false theta functions, see \cite{bringmann2019framework}. The $\hat{Z}$ invariant of negative-definite three-singular fibred Seifert manifolds can be written as a linear combination of false theta functions \cite{Cheng:2018vpl}. 

The mock conjecture from \cite{Cheng:2019uzc} states that for positive-definite three-singular fibred Seifert manifolds, the $\hat{Z}$ invariant is a mock theta function with a shadow given by a linear combination of weight $\frac{3}{2}$ unary theta functions whose Eichler integral is the $\hat{Z}$ invariant for the orientation-reversed manifold. This conjecture tells us how to make sense of the $q \rightarrow q^{-1}$ operation on false theta functions: 
\begin{tcolorbox}[colback=black!5!white, colframe=black!75!black,arc=0mm]
 For a positive integer $p$, and $a \in \Z_{2p}$, a \emph{dual false theta function} $\psi^{\vee}_{p,a}(\tau) \sim \psi_{p,a}(-\tau)$ is a vector-valued mock modular form of weight $\frac{1}{2}$ with respect to $\rho_{-p}$, with shadow $\tfrac{1}{\sqrt{p}}\theta^{1}_{p,a}$ and with Fourier coefficients in $\tfrac{1}{\mathcal{N}_{a}}\Z$ for some integer $\mathcal{N}_{a}$.   
\end{tcolorbox}

%%%%%%%%%%%%%%%%%%%%%%%%%%%%%%%%%%%%%%%

\subsection{The Resurgent Bridge: Borel--Mordell Integrals and Transseries} \label{resurgencebridge}
The $\hat{Z}$ invariants of three-manifolds provide a non-perturbative completion of $\mathrm{SL}(2,\C)$ Chern--Simons theory, and resurgent analysis makes this explicit. We first review the resurgent analysis for the negative Brieskorn sphere $\Sigma(p_{1},p_{2},p_{3})$, which has been studied extensively, and we refer the reader to \cite{Chung:2020efy, Gukov:2016njj, Andersen:2018khh} and the references therein for details. We then review how this structure is used to make sense of the $q \rightarrow q^{-1}$ operation for false theta functions \cite{Adams:2025aad, Costin:2023kla}.

The classical solutions of $\mathrm{SL}(2,\C)$ Chern--Simons theory on a three-manifold are given by the $\mathrm{SL}(2,\C)$ flat connections on that three-manifold. The perturbative expansion of the partition function around these classical solutions can be computed using standard techniques \cite{Axelrod:1991vq}, but the resulting series are typically divergent. Resurgence theory allows us to resum them and extract a non-perturbative partition function. For the negative Brieskorn sphere $\Sigma(p_{1},p_{2},p_{3})$, the Borel resummation of the perturbative expansion around the trivial flat connection can be written as the following Borel--Mordell integral, 
$$ \int e^{- 2 \pi i k \tfrac{u^{2}}{4p}}   \frac{\sin(  \tfrac{\pi u}{p_{1}} )\sin( \tfrac{\pi u}{p_{2}}) \sin( \tfrac{\pi u}{p_{3}})}{ \sin(\pi u)}\mathrm{d} u,$$ 
where $k$ is the Chern--Simons level, and $p = p_{1}p_{2} p_{3}$. 
The non-perturbative contributions from other (non-abelian) flat connections are captured by the poles of the integrand. We can write the transseries expansion of $\hat{Z}(\Sigma(p_1, p_2, p_3);q)$ as
\be \label{zhattransseries}
\hat{Z}(\Sigma(p_1, p_2, p_3);q) = \sqrt{\tfrac{i k}{2p}}\int_{e^{i \phi} \R}  e^{- 2 \pi i k \tfrac{u^{2}}{4p}}   \frac{\sin(  \tfrac{\pi u}{p_{1}} )\sin( \tfrac{\pi u}{p_{2}}) \sin( \tfrac{\pi u}{p_{3}})}{ \sin(\pi u)}\mathrm{d} u + \sum_{\alpha \in \mathcal{M}_{nab}} c_{\alpha } \mathcal{Z}_{\alpha}(k) ,
\ee 
where $\phi = -\frac{1}{2}\mathrm{Arg}(i k)$, $q= e^{\frac{2 \pi i}{k}}$, and $\alpha$ is summed over non-trivial flat connections, which in the case of the Brieskorn sphere are all non-abelian. The contribution $\mathcal{Z}_{\alpha}(k)$ from the non-abelian flat connections is given by 
\be 
\mathcal{Z}_{\alpha}(k)  =\sum_{n \in \Z } m_{(\alpha,n),0} \tilde{q}^{\,n + \mathrm{CS}(\alpha)} Z^{\text{pert}}_{\alpha}\left(\tfrac{1}{k} \right),
\ee 
where $\tilde{q} = e^{-2 \pi i k}$, $ m_{(\alpha,n),0}$ are the Stokes coefficients and $n + \mathrm{CS}(\alpha)$ is the integer lift of the Chern--Simons action of the flat connection $\alpha$. In the case of the three-singular fibred Brieskorn sphere, the perturbative series $Z^{\text{pert}}_{\alpha}\left(\tfrac{1}{k} \right)$ has just one non-zero term at order $\sqrt{k}$ and all higher-order terms vanish. This decomposition \eqref{zhattransseries} reveals the resurgent structure of $\hat{Z}(\Sigma(p_1, p_2, p_3);q)$, which schematically takes the form
\be 
q^{\delta} Z(q) = \sqrt{\tfrac{i k}{2p}} \text{ ``Borel--Mordell integral''} + \sqrt{k} \sum_{\alpha \in \mathcal{M}_{nab}}  c_{\alpha } \tilde{q}^{\tilde{\Delta}_{\alpha}}  W_{\alpha}(\tilde{q}),
\ee 
where $Z(q) \in \Z[[q]]$, $W_{\alpha}(\tilde{q}) \in \Z[[\tilde{q}]]$, and $\tilde{\Delta}_{\alpha} = \mathrm{CS}(\alpha) \mod \Z$. The same resurgent structure is already present in the false theta functions, the building blocks of $\hat{Z}(\Sigma(p_1, p_2, p_3);q)$, where it takes the form of the identity 
\be \label{resurgenceeqfalsetheta}
\psi_{p,a}(\tfrac{1}{k}) + \sqrt{\frac{ k }{i}} \sum_{b =1}^{p-1} \sqrt{\frac{2}{p}} \sin\left(\frac{\pi a b}{p}\right) \psi_{p,b}(-k)= \sqrt{\tfrac{i k}{2p}} \int_{e^{i \phi} \R}  e^{- 2 \pi i k \tfrac{u^{2}}{4p}} \frac{\sin(\pi u\frac{p-a}{p})}{ \sin(\pi u )} \dd u.
\ee 
Note that $\psi_{p,a}(\tfrac{1}{k}) \in q^{\Delta_{a}}\Z[[q]]$ with $\Delta_{a} = \tfrac{a^{2}}{4 p}\mod \Z $. The identity \eqref{resurgenceeqfalsetheta} reveals further structure: the $q$-series and the $\tilde{q}$-series are both false theta functions. This additional structure follows from the quantum modularity of false theta functions, and it is present in $\hat{Z}$ as well \cite{Cheng:2018vpl}. The relation between resurgence and quantum modularity in the context of false theta functions was studied systematically in \cite{Cheng:2026tnu}. More generally, this interplay has been developed into a general framework of modular resurgence \cite{Fantini:2024ihf, Fantini:2024snx, mcspirit2025quantum, Fantini:2025wap, Fantini:2026ale}.

In \cite{Costin:2023kla}, Costin, Dunne, Gruen, and Gukov used the resurgent structure of the false theta functions discussed above to make sense of the $q \rightarrow q^{-1}$ operation on false theta functions. The Borel--Mordell integral on the right-hand side of \eqref{resurgenceeqfalsetheta} can be analytically continued from $k$ in the lower half-plane to the upper half-plane. Using this analytic continuation we can perform the $q \rightarrow q^{-1}$ operation on the Borel--Mordell integral
\begin{align}
 \int_{e^{i \phi} \R}  e^{- 2 \pi i k \tfrac{u^{2}}{4p}} \frac{\sin(\pi u\frac{p-a}{p})}{ \sin(\pi u )} \dd u \hspace{0.5cm} \xleftrightarrow{\hspace{1.75cm}} \hspace{0.5cm}   \int_{e^{i \phi} \R}  e^{- 2 \pi i k \tfrac{u^{2}}{4p}} \frac{\sinh(\pi u\frac{p-a}{p})}{ \sinh(\pi u )} \dd u. 
\end{align}
The resurgent structure of the false theta functions should be preserved under $q \rightarrow q^{-1}$. This imposes the following conditions on the dual false theta functions $\psi^{\vee}_{p, a}$: 
\begin{tcolorbox}[colback=black!5!white, colframe=black!75!black,arc=0mm]
\begin{enumerate}[leftmargin=0.2cm, label=\arabic*]
    \item The $q$-series expansion of $\psi^{\vee}_{p,a}$ satisfies $\psi^{\vee}_{p,a}(\tfrac{1}{k}) \in \tfrac{q^{\Delta^{\vee}_{a}}}{\mathcal{N}_{a}} \Z[[q]] $ with $\Delta^{\vee}_{a} = - \frac{a^{2}}{4 p} \mod \Z$ for some $\mathcal{N}_{a} \in \Z$.
    \item The functions $\psi^{\vee}_{p,a}$ satisfy
    \be 
\psi^{\vee}_{p,a}(\tfrac{1}{k}) + \sqrt{i k} \sum_{b =1}^{p-1} \sqrt{\frac{2}{p}} \sin\left(\frac{\pi a b}{p}\right) \psi_{p,b}^{\vee}(-k) =  \sqrt{\tfrac{i k}{2p}} \int_{e^{i \phi} \R}  e^{- 2 \pi i k \tfrac{u^{2}}{4p}} \frac{\sinh(\pi u\frac{p-a}{p})}{ \sinh(\pi u )} \dd u.
\ee 
\end{enumerate}
\end{tcolorbox}
In \cite{Adams:2025aad, Costin:2023kla, Adams:2025qgj}, using numerical techniques, the first $30$ to $60$ coefficients of some dual false theta functions and analogous $q$-series were obtained for $\hat{Z}$ for orientation-reversed Brieskorn spheres. 

%%%%%%%%%%%%%%%%%%%%%%%%%%%%%%%%%%%%%%%

\subsection{Two Bridges are the Same}\label{equivalenceofbridgessection}
Although the two bridges to the other side ($q \rightarrow q^{-1}$) have different origins, they turn out to be the same bridge. In this subsection, we show that the modular and resurgent characterizations of the dual false theta functions discussed above are equivalent. 

\begin{theorem}\label{equivalenceofbridges}
    For a positive integer $p$ and $a \in \Z_{2p}$, a vector-valued function $\psi_{p, a}^{\vee}$ satisfies the resurgent characterization of dual false theta functions if and only if it satisfies the modular characterization of dual false theta functions.
\end{theorem}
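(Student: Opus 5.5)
The plan is to show that both characterizations pin down the same object: a vector-valued function whose failure of modularity under $S$ (and $T$) is governed by the same non-holomorphic Eichler integral of $\tfrac{1}{\sqrt p}\theta^{1}_{p,a}$. The key input is to identify the hyperbolic Borel--Mordell integral
\[
\sqrt{\tfrac{ik}{2p}}\int_{e^{i\phi}\R} e^{-2\pi i k\frac{u^{2}}{4p}}\frac{\sinh(\pi u\frac{p-a}{p})}{\sinh(\pi u)}\,\dd u
\]
with the period integral $\int_{0}^{i\infty}$ of the unary theta function $\theta^{1}_{p,\cdot}$, suitably weighted. This is the ``mock'' analogue of the fact used for \eqref{resurgenceeqfalsetheta}. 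I would prove it by expanding $\sinh(\pi u\frac{p-a}{p})/\sinh(\pi u)$ as a Fourier-type integral/partial fractions, or by writing $e^{-2\pi i k u^{2}/4p}$ as a Gaussian integral and exchanging integrals. The resulting integrand contains $\sum_{n\in a+2p\Z} n\, e^{\pi i n^{2} z/2p}$, and after $\tau=1/k$ one obtains
\[
\text{RHS} \;=\; c\int_{0}^{i\infty}\frac{\theta^{1}_{p,a}(z)}{\sqrt{-i(z+\tau)}}\,\dd z
\]
for an explicit constant $c$. This is exactly the obstruction $g^{*}(\tau)-g^{*}|_{1/2,-p}S(\tau)$ for $g=\tfrac{1}{\sqrt p}\theta^{1}_{p,a}$, since $g^{*}$ minus its $S$-transform equals the period integral from $0$ to $i\infty$.

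\textbf{Modular $\Rightarrow$ resurgent.} Suppose $\psi^{\vee}$ is mock with shadow $\tfrac{1}{\sqrt p}\theta^{1}_{p,\cdot}$, so that $\hat f=\psi^{\vee}+g^{*}$ transforms under $\rho_{-p}$. I would write out $\hat f|S=\hat f$ in components using $\rho_{-p}(S)_{a,b}\propto e^{i\pi ab/p}$. Using $\psi^{\vee}_{p,-a}=-\psi^{\vee}_{p,a}$ and $\psi^{\vee}_{p,0}=\psi^{\vee}_{p,p}=0$ (inherited from the shadow's oddness, which I may need to impose or derive), I would fold the sum over $b\in\Z_{2p}$ into $\sum_{b=1}^{p-1}$ with the coefficient $\sin(\pi ab/p)$. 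This yields the identity of characterization 2, with the $g^{*}$ terms collapsing to the period integral and hence to the Borel--Mordell integral. The $q$-expansion condition follows from $T$-invariance under $\rho_{-p}(T)$, which forces exponents in $-\tfrac{a^{2}}{4p}+\Z$, together with the integrality assumption.

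\textbf{Resurgent $\Rightarrow$ modular.} Define $\hat f:=\psi^{\vee}+g^{*}$. Condition 1 gives $T$-covariance, since $g^{*}$ is $T$-covariant termwise. Condition 2 combined with the period identity gives $S$-covariance, by reversing the computation above. Because $S$ and $T$ generate $\mathrm{Mp}(2,\Z)$, $\hat f$ is modular. Since $\xi_{1/2}\hat f=g$ and $\psi^{\vee}$ is holomorphic, $\hat f$ is harmonic; the growth condition follows from the $q$-series form. Hence $\psi^{\vee}$ is mock with the required shadow.

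The main obstacle is the period identity: matching normalizations, the branch of $\sqrt{ik}$, the rotation angle $\phi$, and the fact that condition 2 involves only $b\in\{1,\dots,p-1\}$ while $\rho_{-p}$ mixes all of $\Z_{2p}$. I would first verify the identity for $\Im k>0$ by direct Gaussian computation and then extend it by analytic continuation. I would also check carefully that the antisymmetry $\psi^{\vee}_{-a}=-\psi^{\vee}_{a}$ is consistent on both sides, or is part of the notion of a vector-valued function indexed by $a$.
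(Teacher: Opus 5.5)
Your proposal is correct. The direction modular $\Rightarrow$ resurgent is essentially the paper's argument:
\begin{itemize}
\item it separates the holomorphic and non-holomorphic parts of the $S$-transformed completion;
\item it uses $z\mapsto -1/z$ and the weight-$\tfrac32$ modularity of $\theta^{1}_{p,a}$ to glue the two Eichler integrals into a single integral from $0$ to $i\infty$;
\item it proves your period identity with the geometric expansion $\sinh(\pi u\tfrac{p-a}{p})/\sinh(\pi u)=\sum_{n\in a+2p\Z}\mathrm{Sign}(n)e^{-\pi u|n|/p}$ for $u>0$, together with the termwise substitution $z=\tfrac{2iu}{n}-\tfrac{u^{2}}{n^{2}\tau}$.
\end{itemize}
The paper works directly at $\tau=1/k\in\mathbb{H}$, i.e.\ $\mathrm{Im}\,k<0$ rather than $\mathrm{Im}\,k>0$, where the real-axis Gaussian integral converges, and then rotates the contour to $e^{i\phi}\R$. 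No analytic continuation is needed.

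The genuine difference is in the converse. The paper does not reverse the computation. It picks an auxiliary mock modular form $\chi_{p,a}$ with the same shadow, whose existence for every $p$ it imports from the Li--Schwagenscheidt construction of Section 3. It then observes that $\psi^{\vee}_{p,a}-\chi_{p,a}$ obeys the homogeneous $T$- and $S$-laws, because the Borel--Mordell integrals cancel, and concludes that this difference is weakly holomorphic modular. Your route sets $\hat f=\psi^{\vee}+g^{*}$ and observes that the period identity involves $\theta^{1}$ alone, so condition 2 is literally equivalent to $S$-invariance of $\hat f$. This is self-contained and does not depend on any existence theorem. The price is that you must check harmonicity and growth of $\hat f$ yourself, which you do. The paper gains brevity by recycling the first half, and as a by-product exhibits the affine structure over $M^{!}_{\frac{1}{2},-p}$ used later in the non-uniqueness discussion.

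Your hedge about oddness can be resolved. On the modular side, $\psi^{\vee}_{p,-a}=-\psi^{\vee}_{p,a}$ (hence $\psi^{\vee}_{p,0}=\psi^{\vee}_{p,p}=0$) is automatic. It comes from invariance under $S^{2}=(-I,i)$, not from the oddness of the shadow: an odd shadow alone would not exclude adding an even weakly holomorphic form. Concretely, $\rho_{-p}(S)^{2}e_{a}=i\,e_{-a}$ and the weight-$\tfrac12$ automorphy factor is $i^{-1}$, so every weight-$\tfrac12$ form for $\rho_{-p}$ satisfies $f_{-a}=-f_{a}$. On the resurgent side, oddness is the convention for extending the components $1\le a\le p-1$ to all of $\Z_{2p}$. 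The paper uses it tacitly when it writes the $S$-transformation with $\sin(\pi ab/p)$ and $1\le b\le p-1$ only.
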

\begin{proof} \textbf{Modular $\implies$ resurgent.} Suppose $\psi_{p,a}^{\vee}$ is a vector-valued mock modular form of weight $\frac{1}{2}$ with respect to $\rho_{-p}$, with shadow $\tfrac{1}{\sqrt{p}}\theta^{1}_{p,a}$ and with Fourier coefficients in $\tfrac{1}{\mathcal{N}_{a}}\Z$ for some integer $\mathcal{N}_{a}$.

Let us denote by $\hat{\psi}_{p,a}^{\vee}$ the harmonic Maass form (modular completion) associated with $\psi_{p,a}^{\vee}$. It is given by 
\be 
\hat{\psi}_{p,a}^{\vee}(\tau, \bar{\tau}) = \psi_{p,a}^{\vee}(\tau) - \sqrt{\frac{-i}{2p}} \int_{- \bar{\tau}}^{\infty}  \frac{\overline{\theta_{p,a}^{1}(-\bar{z})}}{ \sqrt{z+ \tau} } \dd z.
\ee 
\textit{$T$-transformation.} Under $\tau \mapsto \tau +1$, the completion $\hat{\psi}_{p,a}^{\vee}$ transforms as $\hat{\psi}_{p,a}^{\vee}(\tau +1) = e^{- 2 \pi i \tfrac{a^{2}}{4p}} \hat{\psi}_{p,a}^{\vee}(\tau)$. The non-holomorphic part transforms in the same way:
\be 
 \int_{- \bar{\tau}-1}^{\infty}  \frac{\overline{\theta_{p,a}^{1}(-\bar{z})}}{ \sqrt{z+ \tau+1} } \dd z  =   \int_{- \bar{\tau}}^{\infty}  \frac{\overline{\theta_{p,a}^{1}(-\bar{z}+1)}}{ \sqrt{z+ \tau} } \dd z =  e^{- 2 \pi i \tfrac{a^{2}}{4p}} \int_{- \bar{\tau}}^{\infty}  \frac{\overline{\theta_{p,a}^{1}(-\bar{z})}}{ \sqrt{z+ \tau} } \dd z.
\ee 
Hence the holomorphic part satisfies $\psi_{p,a}^{\vee}(\tau+1) = e^{- 2 \pi i \tfrac{a^{2}}{4p}} \psi_{p,a}^{\vee}(\tau)$. Together with the integrality assumption, this means that $\psi^{\vee}_{p, a}(\tfrac{1}{k}) \in \tfrac{q^{\Delta^{\vee}_{a}}}{\mathcal{N}_{a}} \Z[[q]] $ with $\Delta^{\vee}_{a} = - \frac{a^{2}}{4 p} \mod \Z$, which is the first resurgent condition.\footnote{We would like to thank Miranda Cheng for pointing this out.}

\textit{$S$-transformation.} Under $\tau \mapsto -1/\tau$, $\hat{\psi}_{p,a}^{\vee}$ transforms as
\be 
\hat{\psi}_{p,a}^{\vee}(\tau, \bar{\tau}) = - \sqrt{\frac{i}{\tau}} \sum_{b =1}^{p-1} \sqrt{\frac{2}{p}} \sin\left(\frac{\pi a b}{p}\right)  \hat{\psi}_{p,b}^{\vee}(-1/\tau,-1/\bar{\tau}). 
\ee 
Collecting the holomorphic parts on the left-hand side and the non-holomorphic part on the right-hand side, we get 
\begin{align} \label{stransformdftholnonholsep}
& \psi_{p,a}^{\vee}(\tau) + \sqrt{\frac{i}{\tau}} \sum_{b =1}^{p-1} \sqrt{\frac{2}{p}} \sin\left(\frac{\pi a b}{p}\right)  \psi_{p,b}^{\vee}(-1/\tau) \endline  = &  \sqrt{\frac{- i}{2p}} \left[ \int_{- \bar{\tau}}^{\infty}  \frac{\overline{\theta_{p,a}^{1}(-\bar{z})}}{ \sqrt{z+ \tau} } \dd z + \sqrt{\frac{i}{\tau}} \sum_{b =1}^{p-1} \sqrt{\frac{2}{p}} \sin\left(\frac{\pi a b}{p}\right)   \int_{1/\bar{\tau}}^{\infty}  \frac{\overline{\theta_{p,b}^{1}(-\bar{z})}}{ \sqrt{z-1/ \tau} } \dd z  \right].
\end{align}
We now show that the right-hand side is a Borel--Mordell integral. Changing variables $z \mapsto - 1/z$ in the last integral gives
\be 
\sqrt{\frac{i}{\tau}}\int_{1/\bar{\tau}}^{\infty}  \frac{\overline{\theta_{p,b}^{1}(-\bar{z})}}{ \sqrt{z-1/ \tau} } \dd z = \sqrt{\frac{i}{\tau}} \int_{- \bar{\tau}}^{0}  \frac{\overline{\theta_{p,b}^{1}(1/\bar{z})}}{ \sqrt{-1/z -1/ \tau} } \frac{1}{z^{2}} \dd z =  \int_{0}^{- \bar{\tau}} \frac{ \overline{\theta_{p,b}^{1}(1/\bar{z})}   }{\sqrt{z + \tau}} \left(\frac{i}{z}\right)^{\frac{3}{2}} \dd z,
\ee 
so the bracket in \eqref{stransformdftholnonholsep} becomes, 
\begin{align}
    & \int_{- \bar{\tau}}^{\infty}  \frac{\overline{\theta_{p,a}^{1}(-\bar{z})}}{ \sqrt{z+ \tau} } \dd z +  \int_{0}^{- \bar{\tau}} \frac{ 1}{\sqrt{z + \tau}}  \left(\frac{i}{z}\right)^{\frac{3}{2}} \sum_{b =1}^{p-1} \sqrt{\frac{2}{p}} \sin\left(\frac{\pi a b}{p}\right) \overline{\theta_{p,b}^{1}(1/\bar{z})}   \dd z.
\end{align}
Since $\theta_{p,a}^{1}$ is modular of weight $3/2$ with respect to $\rho_{p}$, the sum over $b$ in the second integrand equals $\overline{\theta_{p,a}^{1}(-\overline{z})}$. The two integrals then combine into a single integral over $\R_{>0}$:
\begin{align}
     \int_{- \bar{\tau}}^{\infty}  \frac{\overline{\theta_{p,a}^{1}(-\bar{z})}}{ \sqrt{z+ \tau} } \dd z +  \int_{0}^{- \bar{\tau}} \frac{ \overline{\theta_{p,a}^{1}(-\bar{z})}}{\sqrt{z + \tau}}  \dd z =  &\int_{0}^{\infty}  \frac{\overline{\theta_{p,a}^{1}(-\bar{z})}}{ \sqrt{z+ \tau} } \dd z = \sum_{n \in a + 2 p \Z}  \int_{0}^{\infty}  \frac{ n e^{ i \pi z \frac{n^{2}}{2p}    }}{ \sqrt{z+ \tau} } \dd z,
\end{align}
where in the last step we used the definition of $\theta_{p, a}^{1}$ and exchanged integration and summation.

Substituting $z= \frac{2 i  u}{ n} - \frac{u^{2}}{n^{2}\tau} $ and summing the resulting geometric series, we find
\begin{align}
    \sum_{n \in a + 2 p \Z}  \int_{0}^{\infty}  \frac{ n e^{ i \pi z \frac{n^{2}}{2p}    }}{ \sqrt{z+ \tau} } \dd z  = & \sqrt{-1/\tau}  \sum_{n \in a + 2 p \Z} 2 \int_{0}^{\infty} \mathrm{Sign}(n) e^{- \frac{i \pi u^{2}}{2 p \tau}} e^{- \frac{\pi u |n|}{p}} \dd u \endline
    = &2 \sqrt{-1/\tau}    \int_{0}^{\infty}  e^{- \frac{i \pi u^{2}}{2 p \tau}} \sum_{n \in a + 2 p \Z} \mathrm{Sign}(n)e^{- \frac{\pi u |n|}{p}} \dd u \endline
    = &2 \sqrt{-1/\tau}    \int_{0}^{\infty}  e^{- \frac{i \pi u^{2}}{2 p \tau}} \frac{\sinh(\pi u \tfrac{p-a}{p})}{\sinh(\pi u)} \dd u .
\end{align}
Therefore,
\begin{align} \label{psidualstransform1}
\psi_{p,a}^{\vee}(\tau) + \sqrt{\frac{i}{\tau}} \sum_{b =1}^{p-1} \sqrt{\frac{2}{p}} \sin\left(\frac{\pi a b}{p}\right)  \psi_{p,b}^{\vee}(-1/\tau)  = &  2\sqrt{\frac{i}{2 p \tau}} \int_{0}^{\infty}  e^{- \frac{i \pi u^{2}}{2 p \tau}} \frac{\sinh(\pi u \tfrac{p-a}{p})}{\sinh(\pi u)} \dd u 
\end{align}
Setting $\tau = \tfrac{1}{k}$, using that the integrand is even in $u$, and rotating the contour, this becomes
\begin{align}
    \psi_{p,a}^{\vee}(\frac{1}{k}) + \sqrt{ik} \sum_{b =1}^{p-1} \sqrt{\frac{2}{p}} \sin\left(\frac{\pi a b}{p}\right)  \psi_{p,b}^{\vee}(-k)  = &  \sqrt{\tfrac{i k}{2p}} \int_{e^{i \phi} \R}  e^{- 2 \pi i k \tfrac{u^{2}}{4p}} \frac{\sinh(\pi u\frac{p-a}{p})}{ \sinh(\pi u )} \dd u,
\end{align}
which is the second resurgent condition.

\textbf{Resurgent $\implies$ modular.} Now suppose $\psi_{p, a}^{\vee}$ satisfies the resurgent characterization. Let $\chi_{p,a}$ be any vector-valued mock modular form of weight $\frac{1}{2}$ with respect to $\rho_{-p}$, with shadow $\tfrac{1}{\sqrt{p}}\theta^{1}_{p,a}$ and with Fourier coefficients in $\tfrac{1}{\mathcal{N}_{a}}\Z$ for some integer $\mathcal{N}_{a}$; as we will see in Section \ref{dualfalseformulas}, such a function exists for all $p$, $a$. Set $f_{p,a} := \psi_{p,a}^{\vee} - \chi_{p,a}$.

By the first resurgent condition, $\psi_{p,a}^{\vee}(\tau+1) = e^{- 2 \pi i \tfrac{a^{2}}{4p}} \psi_{p,a}^{\vee}(\tau)$, and $\chi_{p,a}$ transforms in the same way, so 
$$f_{p,a}(\tau+1) = e^{- 2 \pi i \tfrac{a^{2}}{4p}}f_{p,a}(\tau).$$

By the first part of the proof, $\chi_{p, a}$ satisfies \eqref{psidualstransform1}, and by the second resurgent characterization so does $\psi_{p, a}^{\vee}$. Subtracting the two, the Borel--Mordell integrals cancel, and under $\tau \mapsto - 1/\tau$ we obtain
\begin{align} 
f_{p,a}(\tau) = - \sqrt{\frac{i}{\tau}} \sum_{b =1}^{p-1} \sqrt{\frac{2}{p}} \sin\left(\frac{\pi a b}{p}\right)  f_{p,b}(-1/\tau).
\end{align}
Since $S$ and $T$ generate $\mathrm{Mp}(2,\Z)$, $f_{p, a}$ is a weakly holomorphic modular form of weight $\frac{1}{2}$ with respect to $\rho_{-p}$. Adding a weakly holomorphic modular form to a mock modular form gives another mock modular form with the same shadow. Hence, $\psi_{p, a}^{\vee} = \chi_{p, a}+ f_{p, a}$ is a vector-valued mock modular form of weight $\frac{1}{2}$ with respect to $\rho_{-p}$, with shadow $\tfrac{1}{\sqrt{p}}\theta^{1}_{p, a}$, and its coefficients are in $\frac{1}{\mathcal{N}_{a}}\Z$ by the first resurgent condition.
\end{proof}

%%%%%%%%%%%%%%%%%%%%%%%%%%%%%%%%%%%%%%%

\section{Dual False Theta Functions}\label{dualfalseformulas}
In this section, we will write down one family of dual false theta functions for any positive integer $p$, and one family for special values of $p$. The family that gives dual false theta functions for any $p$ is from \cite{li2022mock}, while the family for special values of $p$ is, to our knowledge, new.

%%%%%%%%%%%%%%%%%%%%%%%%%%%%%%%%%%%%%%%

\subsection{Li--Schwagenscheidt Family of Mock Theta Functions}

In \cite{li2022mock}, Li and Schwagenscheidt constructed a vector-valued mock modular form of weight half for $\rho_{-p}$ and with shadow $\tfrac{1}{\sqrt{p}}\theta^{1}_{p, a}$ and rational coefficients of bounded denominator. That is, they gave a dual false theta function. 

Suppose $2p$ is not a square, and let $ u + v \sqrt{2p} $ be the smallest totally positive unit $>1$ of $\Q(\sqrt{2p})$ such that $v$ is even and $\mathrm{lcm}(2p,4) \mid (u-1)$. By a totally positive unit, we mean $u\pm v \sqrt{2p}>0 $. Then the dual false theta functions are given by: 
\be 
\psi^{\vee, \eta^{3}}_{p,a}(q) = \frac{1}{q^{\tfrac{1}{8}}(q)_{\infty}^{3}} \sum_{\substack{x \in a+ 2 p \Z  \\y \in \Z \\ 0 < - \tfrac{x}{y} < \tfrac{v p}{u} }} \mathrm{Sign}(x) \left( \frac{-4}{y}\right) \left[ \frac{y}{2} + \frac{v x}{u-1} \right] q^{\tfrac{y^{2}}{8}  - \tfrac{x^{2}}{4p}  },
\ee 
where $\left( \tfrac{-4}{y}\right)$ is the Kronecker symbol. Note that $\psi^{\vee, \eta^{3}}_{p,a}(q)$ has coefficients in $\frac{1}{u-1}\Z$.

Suppose $2p$ is a square, and let $E_{2}(\tau)$ be the Eisenstein series given by 
\be 
E_{2}(\tau) = 1-24 \sum_{n =1}^{\infty} \sigma_{1}(n) q^{n} =1-24 \sum_{n =1}^{\infty} \frac{n q^{n}}{1-q^{n}},
\ee 
where $\sigma_{1}$ is the divisor function. Then for $a = b\sqrt{p/2}$ the dual false theta functions are given by,
\begin{align}
    & \psi^{\vee, \eta^{3}}_{p,a}(q) \endline = & \frac{1}{\eta(\tau)^{3}} \Bigg[ \left(\frac{-4}{b} \right) \left( \frac{E_{2}(\tau)}{12\sqrt{2p}} - \sqrt{2p} \, \mathbb{B}_{2}\left(\frac{b}{2 \sqrt{2p}} \right)  \right)+  \sum_{\substack{x \in a+ 2p \Z , \, y \in \Z  \\ y > \tfrac{\sqrt{2}|x|}{\sqrt{p}}}} \left(\frac{-4}{y} \right) \mathrm{Sign}(x) y q^{ \tfrac{y^{2}}{8} - \tfrac{x^{2} }{4p}}   \Bigg],
\end{align}
and for $a \notin \sqrt{p/2}\Z$,  
\begin{align}
    \psi^{\vee, \eta^{3}}_{p,a}(q) = & \frac{1}{q^{\tfrac{1}{8}}(q)_{\infty}^{3}}   \sum_{\substack{x \in a+ 2p \Z, \, y \in \Z \\ y > \tfrac{\sqrt{2}|x|}{\sqrt{p}}}} \left(\frac{-4}{y} \right) \mathrm{Sign}(x) y \, q^{ \tfrac{y^{2}}{8} - \tfrac{x^{2} }{4p}} ,
\end{align}
where $(\frac{-4}{y})$ and $(\frac{-4}{b})$ are the Kronecker symbols, and $ \mathbb{B}_{2}$ is a one-periodic function which agrees with the Bernoulli polynomial on the interval $[0,1)$.

%%%%%%%%%%%%%%%%%%%%%%%%%%%%%%%%%%%%%%%

\subsection*{Li--Schwagenscheidt family from regularization of \texorpdfstring{$\,\eta(\tau)^{3}\, \psi_{p,a}(- \tau)$}{eta(tau)cubed psi(-tau)}}

Since $\eta(\tau)^{3} = \tfrac{1}{2} \sum_{y \in 1 + 2 \Z} (\tfrac{-4}{y})y \, q^{\tfrac{y^{2}}{8}}$, the formal product $\eta(\tau)^{3} \psi_{p,a}(- \tau)$ reads
\be \label{divergentsumetacubed}
\eta(\tau)^{3} \psi_{p,a}(- \tau) \sim \frac{1}{2}\sum_{\substack{x \in a+ 2 p \Z \\ y \in 1 + 2 \Z }} \mathrm{Sign}(x) \left(\frac{-4}{y}\right) y \, q^{\frac{y^{2}}{8} - \frac{x^{2}}{4p} } .
\ee 
This sum is divergent, since the exponent $\frac{y^{2}}{8} - \frac{x^{2}}{4p} $ is indefinite. We now regularize it and show that the result is the Li--Schwagenscheidt family of mock theta functions. 

Consider the even lattice $L = \left(\Z^{2}, \mathrm{diag}(-2p, 4)\right)$ of signature $(1,1)$. Let $ A_{L}:= L^{*}/L$ be the discriminant form, where $L^{*}$ is the dual lattice. Denote by $O(\R^{1,1})$ the orthogonal group of $L \otimes \R$ and by $O^{+}(\R^{1,1})$ its identity component. Let $O(L) \subset \mathrm{GL}(2,\Z)$ be the orthogonal group of $L$, let $\Gamma_{L} \subset O(L)$ be the subgroup acting trivially on $A_{L}$, and set $\Gamma^{+}_{L} := \Gamma_{L}\cap O^{+}(\R^{1,1})$. Explicitly,
\be 
\Gamma^{+}_{L} = \left\{ \begin{pmatrix}
    u & 2v \\ p v & u
\end{pmatrix} \in \mathrm{SL}(2,\Z) \bigg\vert v \text{ is even }, u \geq 1 ,\mathrm{lcm}(2p,4) \mid (u-1) \right\}.
\ee 
If $\sqrt{2p} \in \Z$, then the group $\Gamma^{+}_{L}$ is trivial. If $\sqrt{2p} \notin \Z$, it is infinite cyclic, generated by 
$$\gamma= \begin{pmatrix}
    u & 2v \\ p v & u
\end{pmatrix},$$
where $ u + v \sqrt{2p} $ is the smallest totally positive unit $>1$ of $\Q(\sqrt{2p})$ with $v$ even and $\mathrm{lcm}(2p,4) \mid (u-1)$. 

\textbf{The case} $\sqrt{2p} \notin \Z$. An element $\begin{pmatrix}
    u_1 & 2 v_1 \\ p v_1 & u_1
\end{pmatrix} \in \Gamma^{+}_{L}$ acts on $(x,y)\in (a + 2 p \Z)\times (1 + 2 \Z)$, by 
$$ (x,y) \mapsto (x,y) \begin{pmatrix}
    u_1 & 2 v_1 \\ p v_1 & u_1
\end{pmatrix} = (u_{1} x+ p v_{1} y , 2 v_1 x + u_1 y), $$
which preserves the exponent $ \frac{y^{2}}{8} - \frac{x^{2}}{4p}  $. Writing $(x_{n},y_{n}):= (x,y)\cdot \gamma^{n}$, we can therefore organize the sum \eqref{divergentsumetacubed} into $\Gamma^{+}_{L}$-orbits:
\begin{align}
    \eta(\tau)^{3} \psi_{p,a}(- \tau) \sim \frac{1}{2}\sum_{\substack{x \in a+ 2 p \Z \\ y \in 1 + 2 \Z } \big/ \Gamma_{L}^{+}  } \sum_{n \in \Z} \mathrm{Sign}(x_{n}) \left(\frac{-4}{y_{n}}\right) y_{n} \, q^{\frac{y^{2}}{8} - \frac{x^{2}}{4p} } .
\end{align}
As a set of orbit representatives, we may take the union of two fundamental domains, 
\begin{align*}
    \mathcal{F}_{-} = & \left\{ (x,y)\in (a+2 p \Z)\times (1+2 \Z) \Big\vert \, \frac{u}{2v}< \frac{x}{y}\right\}, \\
    \mathcal{F}_{+} = & \left\{ (x,y)\in (a+2 p \Z)\times (1+2 \Z) \Big \vert \, 0< - \frac{x}{y}< \frac{v p}{u}\right\},
\end{align*}
on which the exponent $ \frac{y^{2}}{8} - \frac{x^{2}}{4p} $ is negative and positive respectively.

Along each orbit, the divisibility conditions on $u$ and $v$ imply $y_{n} = y \mod 4$, and hence $(\tfrac{-4}{y_{n}}) = (\tfrac{-4}{y})$. For the sign of $x_{n}$, we have $\mathrm{Sign}(x_{n}) = \mathrm{Sign}(x)$ for all $n$ on $\mathcal{F}_{-}$, while on $\mathcal{F}_{+}$ we have $\mathrm{Sign}(x_{n}) = \mathrm{Sign}(x)$ for $n \leq 0$ and $\mathrm{Sign}(x_{n}) = -\mathrm{Sign}(x)$ for $n>0$. Therefore,
\begin{align}
  \eta(\tau)^{3} \psi_{p,a}(- \tau) \sim & \frac{1}{2} \sum_{\substack{ x \in a + 2 p \Z, y \in 1 + 2 \Z \\ \frac{u}{2 v} < \frac{x}{y}}}   \mathrm{Sign}(x) \left(\frac{-4}{y}\right) \left(\sum_{n \in \Z} y_{n} \right)q^{\frac{y^{2}}{8} - \frac{x^{2}}{4p} } \endline 
  & + \frac{1}{2}\sum_{\substack{ x \in a + 2 p \Z, y \in 1 + 2 \Z \\ 0< -\frac{x}{y}< \frac{v p}{u} }}   \mathrm{Sign}(x) \left(\frac{-4}{y}\right) \left( \sum_{n=-\infty}^{0} y_{n} - \sum_{n = 1}^{\infty} y_{n}\right)q^{\frac{y^{2}}{8} - \frac{x^{2}}{4p} } .
\end{align}
The inner sums are divergent geometric series: with $\lambda = u + v \sqrt{2p}$, each $y_{n}$ is a linear combination of $\lambda^{n}$ and $\lambda^{-n}$. We regularize them using $\sum_{n=1}^{\infty} z^{n} = \frac{z}{1-z} = - \sum_{n = -\infty}^{0}z^{n}$, which gives 
\begin{align}
    \sum_{n \in \Z} y_{n}  =& 0  &    \sum_{n=-\infty}^{0} y_{n}- \sum_{n = 1}^{\infty} y_{n} = & 2\left[ \frac{y}{2} + \frac{v x}{u-1} \right].
\end{align}
Hence only $\mathcal{F}_{+}$ contributes, and 
$$ \eta(\tau)^{3} \psi_{p,a}(- \tau) \xrightarrow[]{\mathrm{reg.}}  \sum_{\substack{x \in a+ 2 p \Z  \\y \in \Z \\ 0 < - \tfrac{x}{y} < \tfrac{v p}{u} }} \mathrm{Sign}(x) \left( \frac{-4}{y}\right) \left[ \frac{y}{2} + \frac{v x}{u-1} \right] q^{\tfrac{y^{2}}{8}  - \tfrac{x^{2}}{4p}  }.$$
Dividing by $\eta(\tau)^{3}$, we recover exactly the Li--Schwagenscheidt family $\psi_{p,a}^{\vee, \eta^{3}}$.

\textbf{The case} $\sqrt{2p} \in \Z$. Here $\Gamma^{+}_{L}$ is trivial, and a different regularization reproduces the Li--Schwagenscheidt family; we only sketch it. The sums over the two time-like cones $ \pm y > \frac{\sqrt{2}|x|}{\sqrt{p}}$ are convergent and contribute equally, while the sum over the space-like region $ |y| < \frac{\sqrt{2}|x|}{\sqrt{p}}$ vanishes coefficient by coefficient. The remaining divergent sums lie on the light cone, and can be regularized by slightly deforming the quadratic form $ \tfrac{y^{2}}{8} - \tfrac{x^{2}}{4p} $. Applying zeta regularization along one light-cone direction gives the $E_{2}$ term, and Abel regularization along the other gives the $\mathbb{B}_{2}$ term.

It would be interesting to generalize this $\eta^{3}$-regularization to $\hat{Z}$ for positive-definite plumbed manifolds. 

%%%%%%%%%%%%%%%%%%%%%%%%%%%%%%%%%%%%%%%

\subsection{Special Arithmetic Family}
In \cite{Park:2021ufu}, Park conjectured a regularized surgery formula for $\hat{Z}$ in cases where the Gukov--Manolescu formula \cite{Gukov:2019mnk} fails to produce a $q$-series. It was shown in \cite{Cheng:2024vou} that the $\hat{Z}$ invariants of positive Brieskorn spheres obtained from this conjectured formula can be written as a linear combination of the ratios of Zwegers' indefinite theta function \cite{zwegers2008mock} and the theta function, an ansatz first proposed in \cite{Cheng:2019uzc}. However, \cite{Harichurn:2025suf} showed that the $\hat{Z}$ obtained from the conjectured regularized surgery formula is incompatible with the resurgent characterization of dual false theta functions. Motivated by these works, we construct dual false theta functions as ratios of Zwegers' indefinite theta function and the Dedekind eta function.

Let $A=\mathrm{diag}(-2p, 3)$ and $\beta = (0,\tfrac{1}{6})^{T}$, and define the bilinear form $B(m,n) = m^{T}A n$, for $m, n \in \R^{2}$. For $\mu \in \beta + A^{-1}\Z^{2}$, and $c_{1}$, $c_{2} \in \R^{2}$ satisfying $B(c_1, c_{2}) , B(c_{1}, c_{1}), B(c_{2}, c_{2}) <0$, Zwegers' indefinite theta function is
\be 
\theta_{A,\mu, \beta, c_{1}, c_{2} }(\tau) = \sum_{n \in \mu + \Z^{2}} \left( \mathrm{Sign}\left( B(c_{1}, n) \right) - \mathrm{Sign}\left( B(c_{2}, n) \right) \right) e^{2 \pi i B(\beta, n)} q^{\tfrac{B(n,n)}{2}}.
\ee 

\begin{tcolorbox}[colback=black!5!white, colframe=black!75!black,arc=0mm]
\emph{Let $p$ be a positive integer such that $\sqrt{6p} \notin \Z$, and suppose the fundamental solution $(u, v)$ of the Pell equation $u^{2} - 6 p v^{2}=1$ satisfies: $v$ is even and $ \left(\frac{12}{u}\right) u =- 1 \mod 2p$.} For $a \in \Z_{2p}$, set $\mu_{0}(a) = (\tfrac{a}{2p},\tfrac{1}{6})$, $\mu_{2}(a) = (\tfrac{a}{2p},\tfrac{5}{6})$, $c_{1}= (1,0)^{T}$, and $c_{2}=(u,2 p v)^{T}$. We define the dual false theta functions $\psi^{\vee, \eta}_{p,a}$ by, 
\be \label{sparthdualfalsedef}
\psi^{\vee, \eta}_{p,a}(\tau) := \frac{1}{2\eta(\tau)} \left( e^{- \tfrac{i \pi }{6}}\theta_{A,\mu_{0}(a), \beta, c_{1}, c_{2} }(\tau) + e^{ \tfrac{i \pi }{6}} \theta_{A,\mu_{2}(a), \beta, c_{1}, c_{2} }(\tau) \right).
\ee 
\end{tcolorbox}
The first $20$ values of $p>1$ for which the arithmetic condition is satisfied are:
\begin{align*}
    \{7,13,25,31,37,49,55,61,73,79,97,103,109,121,
   127,151,157,169,175, 181\}.
\end{align*}

\begin{theorem}\label{arthfamthm}
$ \psi^{\vee, \eta}_{p,a}$ is a vector-valued mock modular form of weight $\frac{1}{2}$ with respect to $\rho_{-p}$, with shadow $-\frac{2}{\sqrt{p}}\theta^{1}_{p,a}$.
\end{theorem}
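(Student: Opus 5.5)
We prove the statement with Zwegers' theory of indefinite theta functions. Since $\sqrt{6p}\notin\Z$ and $B(c_1,c_1)=-2p$, $B(c_2,c_2)=-2pu^2+12p^2v^2=-2p(u^2-6pv^2)=-2p<0$, $B(c_1,c_2)=-2pu<0$, Zwegers' convergence hypotheses hold. His completion replaces $\mathrm{Sign}(B(c_j,n))$ by $E\big(B(c_j,n)\sqrt{-2\,\mathrm{Im}\tau/B(c_j,c_j)}\big)$, which yields a real-analytic function $\hat\theta$ transforming like a weight-one theta function for the lattice $(\Z^2,A)$. Concretely, $\hat\theta$ is modular for the Weil representation of the discriminant form $A^{-1}\Z^2/\Z^2$, with the characteristic $\beta$ contributing the twist.

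\textbf{Step 1 (the Weil representation).} Combining the two terms $\mu_0(a),\mu_2(a)$ with the phases $e^{\mp i\pi/6}$ projects the $\Z/6$ factor (from the $3$-part with the $\beta$-twist) onto the component that matches $\eta(\tau)=\sum_{y}\left(\frac{12}{y}\right)q^{y^2/24}$. The $\eta$-type theta function carries the conjugate of a one-dimensional representation of $\mathrm{Mp}(2,\Z)$. Dividing by $\eta$ therefore cancels the $\Z_{12}$-part and leaves the $\rho_{-p}$ part from the $-2p$ factor. This is checked on the generators $T$ and $S$ by finite Gauss-sum computations. The weight drops from $1$ to $1-\tfrac12=\tfrac12$.

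\textbf{Step 2 (reduction to the $c_1$ term).} The completed object is not immediate, because Zwegers' modularity also requires the sum to be a genuine theta function for a fixed lattice coset. The role of the arithmetic hypothesis is that the isometry $\gamma$ of $(\Z^2,A)$ sends $c_1\mapsto c_2$. It must also preserve the cosets $\mu_0(a)+\Z^2$, $\mu_2(a)+\Z^2$ together with the character $e^{2\pi iB(\beta,n)}$, up to the interchange $\mu_0\leftrightarrow\mu_2$ that absorbs the phases. The conditions "$v$ even" and "$\left(\frac{12}{u}\right)u\equiv-1\bmod 2p$" are exactly what make this hold. I would verify this directly: $x\mapsto ux+6pvy$ and $y\mapsto 2vx+uy$ must act correctly modulo $2p$ on $x$ and modulo $6$ on $y$. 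The sign $\left(\frac{12}{u}\right)$ compensates how $y\bmod 12$ moves, and it explains the swapping of $\mu_0$ and $\mu_2$. Using this invariance, the nonholomorphic correction attached to $c_2$ becomes an image of the one attached to $c_1$. The shadow computation then reduces to the $c_1$ term alone.

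\textbf{Step 3 (the shadow).} For $c_1=(1,0)$, Zwegers' correction involves $\sum \mathrm{sign}$-complementary error functions in the $x$-direction only. Applying $\xi_{1}$ to the $c_1$-correction factorizes it as a unary weight-$\tfrac32$ theta function in $x$ with coefficients $x\,\bar q^{x^2/4p}$, times the holomorphic unary theta in $y$. The latter is essentially $\eta$ after the $\mu_0/\mu_2$ combination.

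Dividing by $\eta$ and applying $\xi_{1/2}$ then gives $\overline{\theta^1_{p,a}}$ up to a constant. The $c_2$ term contributes the same amount after Step 2, rather than canceling, because it enters with $-\mathrm{Sign}$ and the $\gamma$-map reverses orientation. The constant I expect is $-\frac{1}{\sqrt p}$ from each term, totaling $-\frac{2}{\sqrt p}$; I would track the normalizations of $\xi_k$ and of $E$ carefully.

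I expect the main obstacle to be Step 2: the precise bookkeeping showing that $\gamma$ permutes the characteristics $\mu_0(a),\mu_2(a)$ and the phases compatibly. This is exactly where the Pell-type condition enters, and a sign error there would flip the shadow's factor between $0$ and $2$.
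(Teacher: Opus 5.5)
Your overall route matches the paper's. Both use Zwegers' completion, and both factor the $c_1$-correction as $\eta(\tau)$ times the non-holomorphic Eichler integral of $\theta^1_{p,a}$. Both transport the $c_2$-correction to a $c_1$-correction using the Pell unit. The paper writes $\Z^2=c_2\Z\oplus c_{2,\perp}\Z$ with $c_{2,\perp}=(3v,u)$, which is exactly the isometry $M=\begin{pmatrix}u&3v\\2pv&u\end{pmatrix}$ with $\det M=u^2-6pv^2=1$. Both obtain the $S$ and $T$ laws from Zwegers' Corollary 2.9 together with the multiplier of $\eta$.

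However, you misdescribe the step you single out as the crux, and the mechanism you give would not yield the factor $2$.

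First, use the coordinates $x=2pn_1$, $y=6n_2$. The substitution that turns $B(c_2,n)=pvy-ux$ into $B(c_1,m)=-x'$ is $x'=ux-pvy$, $y'=uy-6vx$. Your map $x\mapsto ux+6pvy$, $y\mapsto 2vx+uy$ does not preserve $\tfrac{y^2}{24}-\tfrac{x^2}{4p}$.

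Second, this isometry does not preserve the coset and the character. Since $v$ is even, $x'\equiv ua \pmod{2p}$ and $y'\equiv uy\pmod{12}$. Hence $\left(\frac{12}{y}\right)=\left(\frac{12}{u}\right)\left(\frac{12}{y'}\right)$, and the $c_2$-correction satisfies $R_{c_2}(a)=\left(\frac{12}{u}\right)R_{c_1}(ua)$. This is the paper's expression for $G_{p,a,c_2}$ in terms of $\theta^1_{p,au}$.

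The factor $2$ comes from the hypothesis itself. The condition $\left(\frac{12}{u}\right)u\equiv-1\pmod{2p}$ gives $ua\equiv-\left(\frac{12}{u}\right)a$. Also $R_{c_1}(-a)=-R_{c_1}(a)$, because $\theta^1_{p,-a}=-\theta^1_{p,a}$. Hence $R_{c_2}(a)=-R_{c_1}(a)$, and the completion is $\psi^{\vee,\eta}_{p,a}$ plus twice the $c_1$-correction.

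No orientation is reversed, since $\det M=+1$. The sign is the $-1$ in the arithmetic condition, produced jointly by $a\mapsto ua$ and the character value $\left(\frac{12}{u}\right)$.

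If $M$ really preserved the coset data and character as you assert in Step 2, then, since $c_2$ enters with $-\mathrm{Sign}$, you would get $R_{c_1}-R_{c_2}=0$ and zero shadow. So Steps 2 and 3 as written contradict each other.

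Relatedly, the arithmetic condition is not needed for modularity of the completion; Zwegers provides that for any admissible $c_1,c_2$. It only fixes the shape of the shadow. With $v$ even alone, you would get a multiple of $\theta^1_{p,a}-\left(\frac{12}{u}\right)\theta^1_{p,au}$.
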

\begin{proof}
We first compute the shadow of $\psi^{\vee, \eta}_{p,a}$ and then show that the modular completion of $\psi^{\vee, \eta}_{p,a}$ is modular of weight $\frac{1}{2}$ with respect to $\rho_{-p}$.

Note that $B(c_{i}, c_{i}) = -2p$ for $i \in \{1,2\}$. The sublattice $\{ \lambda \in \Z^{2} \vert B(c_1, \lambda) =0 \}$ is generated by $c_{1, \perp}=(0,1)$, and $\{ \lambda \in \Z^{2} \vert B(c_2, \lambda) =0 \}$ is generated by $c_{2, \perp} = (3v,u)$. Since $u^{2}- 6 p v^{2}=1$, the quotient $\Z^{2}/ (c_{i}\Z \oplus c_{i, \perp}\Z)$ is trivial for both $i =1 $ and $i =2$, which makes the shadow computation particularly simple.

\noindent We denote by $\hat{\psi}^{\vee, \eta}_{p,a}$ the modular completion of $\psi^{\vee, \eta}_{p,a}$. It is given by 
\begin{align}
   \hat{\psi}^{\vee, \eta}_{p,a}(\tau, \bar{\tau}) = &  \psi^{\vee, \eta}_{p,a}(\tau) + G_{p,a,c_{1}}(\tau, \bar{\tau} ) -G_{p,a,c_{2}}(\tau, \bar{\tau}),
\end{align}
where 
\begin{align}
    G_{p,a,c_{1}}(\tau, \bar{\tau} ) = \frac{-i  e^{-\frac{i \pi }{6}} }{2\eta(\tau) \sqrt{2p} } \sum_{s \in \{0,2\}} e^{\frac{i \pi s}{6}} \sum_{\ell \in \frac{2 s +1}{6} + \Z  } e^{i \pi \ell } q^{\frac{3 \ell^{2}}{2}}   \int_{- \bar{\tau}}^{\infty }  \frac{\theta_{p,a}^{1}(y)}{\sqrt{-i(y + \tau)}} \dd y, 
\end{align}
and 
\begin{align}
     G_{p,a,c_{2}}(\tau, \bar{\tau} ) = &  \frac{-i  e^{-\frac{i \pi }{6}} }{2\eta(\tau) \sqrt{2p} } \sum_{s \in \{0,2\}} e^{\frac{i \pi s}{6}} \sum_{\ell \in \frac{2 s +1}{6}u + \Z  } e^{i \pi \ell u} q^{\frac{3}{2}\ell^{2}}  e^{ - \pi i v ( a u - (2 s +1)p v )} \int_{- \bar{\tau}}^{\infty }  \frac{\theta_{p,a u -(2s+1)p v}^{1}(y)}{\sqrt{-i(y + \tau)}} \dd y. 
 \end{align}
Using that $v$ is even and $\eta(\tau) = \sum_{n \in \Z} e^{i \pi n} q^{\frac{3}{2} (n - \frac{1}{6})^{2} }$, these simplify to 
\begin{align}
    G_{p,a,c_{1}}(\tau, \bar{\tau} ) =& \frac{-i   }{ \sqrt{2p} } \int_{- \bar{\tau}}^{\infty }  \frac{\theta_{p,a}^{1}(y)}{\sqrt{-i(y + \tau)}} \dd y, \endline
         G_{p,a,c_{2}}(\tau, \bar{\tau} ) = &  \frac{-i   }{ \sqrt{2p} }\left(\frac{12}{u}\right) \int_{- \bar{\tau}}^{\infty }  \frac{\theta_{p,a u }^{1}(y)}{\sqrt{-i(y + \tau)}} \dd y. 
\end{align}
Since $\left(\frac{12}{u}\right) u = -1 \mod 2p$ and $\theta_{p,a}^{1}=- \theta_{p,-a}^{1} $, we obtain 
\be 
G_{p,a,c_{1}}(\tau, \bar{\tau}) - G_{p,a,c_{2}}(\tau, \bar{\tau}) =   \sqrt{\frac{-2i}{p}} \int_{- \bar{\tau}}^{\infty }  \frac{\theta_{p,a}^{1}(y)}{\sqrt{(y + \tau)}} \dd y.
\ee 
Noting that $\overline{\theta_{p,a}^{1}(- \overline{y})} = \theta_{p,a}^{1}(y)$ since $\theta_{p,a}^{1}(y)$ has real coefficients, we see that $\hat{\psi}^{\vee, \eta}_{p,a}$ has shadow $-\frac{2}{\sqrt{p}}\theta^{1}_{p,a}$.

The $T$ transformation of $\hat{\psi}^{\vee, \eta}_{p,a}$ follows from those of Zwegers' theta function and the Dedekind eta function:
$$\hat{\psi}^{\vee, \eta}_{p,a}(\tau +1, \bar{\tau} +1) = e^{- 2 \pi i\frac{a^{2}}{4p}} \hat{\psi}^{\vee, \eta}_{p,a}(\tau, \bar{\tau}).$$
 
For the $S$ transformation, we can combine the transformation of the modular completion of the indefinite theta functions (see Appendix A of \cite{Harichurn:2025suf}, and Corollary 2.9 of \cite{zwegers2008mock}) with that of the Dedekind eta function:
\begin{align}
   & \hat{\psi}^{\vee, \eta}_{p,a}(\tau, \bar{\tau}) \endline
   = &\frac{e^{- \frac{i \pi }{6}}}{2\eta(\tau)} \sum_{s \in \{0,2\}} e^{\frac{i \pi  s}{6}} \frac{e^{\frac{i \pi (2s +1)}{6}} (-\tau)^{-1}}{\sqrt{6p}} \sum_{ b =1}^{2p} \sum_{r = 0}^{2} e^{\frac{i \pi a b}{p} } e^{-\frac{i \pi (2s +1)(2r+1)}{6}} e^{- \frac{i \pi (2r+1)}{6}} \hat{\theta}_{A,(\frac{b}{2p}, \frac{2r+1}{6}), \beta, c_{1}, c_{2}} (\tfrac{-1}{\tau}, \tfrac{-1}{\bar{\tau}}) \endline
   = &\frac{e^{- \frac{i \pi }{6}}}{2\eta(\tfrac{-1}{\tau})}  \frac{\sqrt{-i \tau}(-\tau)^{-1}}{\sqrt{2p}} \sum_{ b =1}^{2p} \sum_{r \in \{0,2\}} e^{\frac{i \pi a b}{p} } e^{\frac{i \pi r}{6}} \hat{\theta}_{A,(\frac{b}{2p}, \frac{2r+1}{6}), \beta, c_{1}, c_{2}} (\tfrac{-1}{\tau}, \tfrac{-1}{\bar{\tau}})\endline
   = & \frac{ 1}{\sqrt{-i \tau}} \sum_{ b =1}^{2p} \sqrt{\frac{2}{p}} \frac{i e^{\frac{i \pi a b}{p} }}{2} \hat{\psi}_{p,b}^{\vee, \eta}(\tfrac{-1}{\tau}, \tfrac{-1}{\bar{\tau}}),
\end{align}
where $\hat{\theta}_{A,(\frac{b}{2p}, \frac{2r+1}{6}), \beta, c_{1}, c_{2}}$ is the modular completion of $\theta_{A,(\frac{b}{2p}, \frac{2r+1}{6}), \beta, c_{1}, c_{2}}$. By Corollary 2.9 from \cite{zwegers2008mock}, $\psi_{p,a}^{\vee, \eta} =- \psi_{p,-a}^{\vee, \eta}  $, and hence,
\be 
\hat{\psi}^{\vee, \eta}_{p,a}(\tau, \bar{\tau})  = -(-i \tau)^{-\frac{1}{2}} \sum_{b =1}^{p-1} \sqrt{\frac{2}{p}} \sin(\frac{\pi a b }{p}) \hat{\psi}_{p,b}^{\vee, \eta}(\tfrac{-1}{\tau}, \tfrac{-1}{\bar{\tau}}).
\ee 
Since $S$ and $T$ generate $\mathrm{Mp}(2,\Z)$, $\hat{\psi}^{\vee, \eta}_{p,a}$ is a harmonic Maass form of weight $\frac{1}{2}$ with respect to $\rho_{-p}$. Hence $ \psi^{\vee, \eta}_{p,a}$ is a vector-valued mock modular form of weight $\frac{1}{2}$ with respect to $\rho_{-p}$, with shadow $-\frac{2}{\sqrt{p}}\theta^{1}_{p,a}$.      
\end{proof}

\noindent The definition \eqref{sparthdualfalsedef} can be brought to a simpler form. Writing out the indefinite theta functions, we have 
\begin{align}
& \psi^{\vee, \eta}_{p,a}(\tau) \endline
     = & \frac{1}{2\eta(\tau)} \sum_{ n_{1} \in \frac{a}{2p} + \Z }  \left( e^{- \tfrac{i \pi }{6}} \sum_{n_{2} \in \frac{1}{6} + \Z } + e^{ \tfrac{i \pi }{6}} \sum_{n_{2} \in \frac{5}{6} + \Z }  \right)  (\mathrm{Sign}(  u n_{1} - 3 v n_{2} )-\mathrm{Sign}(n_{1}) ) e^{i \pi n_{2} } q^{- p n_{1}^{2}  + \frac{3}{2} n_{2}^{2} }   \endline
    = & \frac{1}{2\eta(\tau)} \sum_{x \in a + 2p \Z } \sum_{y \in 1+ 2 \Z} (\mathrm{Sign}(u x -p v y) - \mathrm{Sign}(x) ) \left( \frac{12}{y} \right)q^{\frac{y^{2}}{24 } - \frac{x^{2}  }{4p} } \endline
    = & \frac{-1}{\eta(\tau)} \sum_{\substack{x \in a + 2 p \Z \\ y \in 1+2 \Z \\ 0 < \frac{x}{y} < \frac{p v}{u} }} \mathrm{Sign}(x) \left( \frac{12}{y} \right) q^{\frac{y^{2}}{24 } - \frac{x^{2}  }{4p} },
\end{align}
where in the second line we substituted $x = 2 p n_{1}$, and $y = 6 n_{2}$. The factor $ (\mathrm{Sign}(u x -p v y) - \mathrm{Sign}(x) )$ is non-zero if and only if $0 < \frac{x}{y} < \frac{p v}{u}  $, in which case it equals $- 2 \mathrm{Sign}(x)$. We therefore arrive at the simple form
\be \label{sparthdualfalsedef2}
\psi_{p,a}^{\vee, \eta}(\tau) = \frac{-1}{\eta(\tau)} \sum_{\substack{x \in a + 2 p \Z \\ y \in 1+2 \Z \\ 0 < \frac{x}{y} < \frac{p v}{u} }} \mathrm{Sign}(x) \left( \frac{12}{y} \right) q^{\frac{y^{2}}{24 } - \frac{x^{2}  }{4p} }.
\ee 
In particular, $\psi_{p,a}^{\vee, \eta}(\tau)$ has integer coefficients. 

%%%%%%%%%%%%%%%%%%%%%%%%%%%%%%%%%%%%%%%

\subsection{Examples of Dual False Theta Functions}\label{examplesofdualfalse}

In this subsection, we will write down the first few terms of dual false theta functions from the Li--Schwagenscheidt family and the special arithmetic family. We have multiplied the $q$-series $\psi_{p,a}^{\vee}$ with $\mathcal{N}_{a}$ so that they have integer coefficients. Using these first few terms, we compare the $q$-series with those in the literature.

\noindent \textbf{Li--Schwagenscheidt family} \par \nobreak

\begin{itemize}
    \item $p=3$ 
\begin{align*}
    \psi_{3,1}^{\vee, \eta^{3}}(\tau) =&   q^{-\frac{1}{12}}-16 q^{11/12}-55 q^{23/12}-144 q^{35/12}-330 q^{47/12}-704 q^{59/12} + \cdots \\
        \psi_{3,2}^{\vee, \eta^{3}}(\tau) =&-10 q^{2/3}-44 q^{5/3}-110 q^{8/3}-280 q^{11/3}-572 q^{14/3}-1200 q^{17/3} + \cdots  
\end{align*}
The coefficients of $\psi_{3,1}^{\vee, \eta^{3}}$ match the coefficients of the mock modular form $H_{1}^{(3)}$ from \cite{Cheng:2012tq}.
    \item $p=5$ 
    \begin{align*}
        \psi_{5,1}^{\vee, \eta^{3}}(\tau) =&  q^{-\frac{1}{20}} -4 q^{19/20}-9 q^{39/20}-20 q^{59/20}-35 q^{79/20}-60 q^{99/20} + \cdots  \\
        \psi_{5,2}^{\vee, \eta^{3}}(\tau) =& -5 q^{4/5}-15 q^{9/5}-26 q^{14/5}-54 q^{19/5}-90 q^{24/5} + \cdots  \\ 
        \psi_{5,3}^{\vee, \eta^{3}}(\tau) =& -4 q^{11/20}-11 q^{31/20}-24 q^{51/20}-45 q^{71/20}-80 q^{91/20} + \cdots  \\ 
        \psi_{5,4}^{\vee, \eta^{3}}(\tau) =& - q^{\frac{1}{5}} -6 q^{6/5}-10 q^{11/5}-25 q^{16/5}-36 q^{21/5}+ \cdots  
    \end{align*}
   The coefficients of $\psi_{5,1}^{\vee, \eta^{3}}$ match the coefficients of the mock modular form $H_{1}^{(5)}$ from \cite{Cheng:2012tq}. 
   \item $p=7$ 
   \begin{align*}
       \psi_{7,1}^{\vee, \eta^{3}}(\tau) = &q^{- \frac{1}{28}} -2 q^{27/28}-3 q^{55/28}-5 q^{83/28}-10 q^{111/28}-15 q^{139/28} + \cdots  \\
       \psi_{7,2}^{\vee, \eta^{3}}(\tau) = & -2 q^{6/7}-6 q^{13/7}-10 q^{20/7}-16 q^{27/7}-24 q^{34/7}-40 q^{41/7} + \cdots  \\
       \psi_{7,3}^{\vee, \eta^{3}}(\tau) = &-3 q^{19/28}-6 q^{47/28}-11 q^{75/28}-18 q^{103/28}-29 q^{131/28}-45 \
q^{159/28}  + \cdots  \\
       \psi_{7,4}^{\vee, \eta^{3}}(\tau) = & -2 q^{3/7}-6 q^{10/7}-8 q^{17/7}-18 q^{24/7}-24 q^{31/7}-42 q^{38/7}  + \cdots  \\
       \psi_{7,5}^{\vee, \eta^{3}}(\tau) = & -q^{3/28}-3 q^{31/28}-7 q^{59/28}-11 q^{87/28}-18 q^{115/28}-28 q^{143/28} + \cdots  \\
       \psi_{7,6}^{\vee, \eta^{3}}(\tau) = & -2 q^{5/7}-2 q^{12/7}-6 q^{19/7}-6 q^{26/7}-16 q^{33/7} + \cdots  \\
   \end{align*}
   \item $p=13$
   \begin{align*}
    \psi_{13,1}^{\vee, \eta^{3}}(\tau) = &  3 q^{-\tfrac{1}{52}}-5 q^{51/52}-15 q^{103/52}-32 q^{155/52}-71 q^{207/52}-141 q^{259/52}+ \cdots \\
    \psi_{13,2}^{\vee, \eta^{3}}(\tau) = & q^{- \tfrac{1}{13}} -8 q^{12/13}-24 q^{25/13}-56 q^{38/13}-121 q^{51/13}-243 q^{64/13}+ \cdots
   \end{align*}
\end{itemize}

The $\hat{Z}^{\eta^{3}}$ invariants obtained from the Li--Schwagenscheidt family are as follows:
\begin{itemize}
    \item $\hat{Z}^{\eta^{3}}(\overline{\Sigma(2,3,5)},q) = \chi_{0}(q)-2$, where $\chi_{0}(q)$ is Ramanujan's order $5$ mock theta function. Here the line over $\Sigma(2,3,5)$ denotes orientation reversal. This matches the $\hat{Z}$ obtained from the regularized surgery formula \cite{Park:2021ufu, Cheng:2024vou} and resurgence techniques \cite{Adams:2025aad}.
    \item $\hat{Z}^{\eta^{3}}(\overline{\Sigma(2,3,7)},q) = F_{0}(q)$, where $F_{0}(q)$ is Ramanujan's order $7$ mock theta function. This matches the $\hat{Z}$ obtained from the regularized surgery formula \cite{Park:2021ufu, Cheng:2024vou} and resurgence techniques \cite{Adams:2025aad}.
    \item $\hat{Z}^{\eta^{3}}(\overline{\Sigma(2,3,11)},q) = M_{11,1}(\tau) $, where $M_{11,1}(q)$ is Zagier's order $11$ mock theta function. This matches the $\hat{Z}$ obtained from resurgence techniques \cite{Adams:2025aad}. 
    \item $\overline{\Sigma(2,3,13)}$: 
    \begin{align*}
        \hat{Z}^{\eta^{3}}(\overline{\Sigma(2,3,13)},q) = &  25 q^{263/312}+102 q^{575/312}+255 q^{887/312}+615 q^{1199/312}+1230 \
q^{1511/312} \\ & +2490 q^{1823/312}+4597 q^{2135/312}+8355 \
q^{2447/312}+ \cdots 
    \end{align*}
    This $q$-series does not match $\hat{Z}$ obtained from the regularized surgery formula \cite{Park:2021ufu, Cheng:2024vou}, or from resurgence techniques \cite{Adams:2025aad}.
\end{itemize}

\noindent \textbf{Special Arithmetic family} \par \nobreak

\begin{itemize}
    \item The first few coefficients of $\psi_{7,a}^{\vee, \eta}$ and $\psi_{7,a}^{\vee, \eta^{3}}$ match.
    \item $p=13$
    \begin{align*}
        \psi_{13,1}^{\vee, \eta}(\tau) = & 2 q^{- \frac{1}{52}} -2 q^{51/52} -2 q^{103/52} -2q^{207/52}  + \cdots \\
        \psi_{13,2}^{\vee, \eta}(\tau) = & -2 q^{25/13} -4 q^{38/13} -4q^{51/13} -6 q^{64/13} - 8 q^{77/13} + \cdots 
    \end{align*}
    Note that these $q$-series differ from $\psi_{13,a}^{\vee, \eta^{3}}$.
    \item We can compute the $\hat{Z}^{\eta}$ for the Brieskorn sphere $\overline{\Sigma(5,7,11)}$,
    \begin{align*}
        \hat{Z}^{\eta}(\overline{\Sigma(5,7,11)},q) = & q^{54/385}+q^{439/385}-q^{1209/385}-q^{1594/385}-5 q^{1979/385} \\ &  -8 q^{2364/385}-12 q^{2749/385}-20 q^{3134/385} + \cdots
    \end{align*}
\end{itemize}

%%%%%%%%%%%%%%%%%%%%%%%%%%%%%%%%%%%%%%%

\subsection{Non-uniqueness of Dual False Theta Functions}\label{nonuniquenessdiscussio}
The examples in Subsection \ref{examplesofdualfalse} show that the characterization of dual false theta functions does not determine them uniquely. For $p=13$, the Li--Schwagenscheidt and the special arithmetic families give different $q$-series. This non-uniqueness is built into the characterization, as the commutative diagram \eqref{commutativediagmocktheta} shows. Given a dual false theta function $\psi_{p, a}^{\vee}$, we may add to it a vector-valued weakly holomorphic modular form of weight $\frac{1}{2}$ with respect to $\rho_{-p}$ and obtain another dual false theta function. Recall that weakly holomorphic forms can have singularities of the type $q^{-c}$ at $\Q \cup \{\infty \}$. The diagram also tells us that the dimension of the affine space of vector-valued mock modular forms of weight $\frac{1}{2}$ with respect to $\rho_{-p}$ and with a fixed shadow equals $\mathrm{dim}(M_{\frac{1}{2},-p}^{!})$. The vector space $M_{\frac{1}{2},-p}^{!}$ of weakly holomorphic modular forms of weight $\frac{1}{2}$ with respect to $\rho_{-p}$ is infinite dimensional; however, once we restrict the allowed pole order at $\Q \cup \{\infty\}$, it becomes finite-dimensional \cite{zagier2009ramanujan}.  

This raises the question of whether a finer characterization singles out $\hat{Z}$ uniquely, or whether $\hat{Z}$ is inherently non-unique. Below we discuss three possible ways to single out a unique dual false theta function $\psi_{p, a}^{\vee}$ or $\hat{Z}$:
\begin{itemize}
    \item \textbf{$c_{\text{eff}}\,$-minimization}: The effective central charge quantifies the growth rate of coefficients of supersymmetric partition functions of $3d$ $\mathcal{N}=2$ theories \cite{Gukov:2023cog}. We can extend the notion of $c_{\text{eff}}$ to $q$-series as in \cite{Harichurn:2025suf}. The $c_{\text{eff}}$ of a mock theta function is given by 
    \be 
    c_{\text{eff}} = 24 \, \mathrm{max}\{ c \vert q^{-c} \text{ is a pole at $\Q \cup \{\infty\}$} \}.
    \ee 
    Since restricting the pole order $q^{-c}$ renders the space of possible dual false theta functions finite-dimensional, we could single out a unique one by choosing the representative that minimizes $c_{\text{eff}}$. While it is not clear in general whether a unique mock theta function minimizes $c_{\text{eff}}$, \cite{Costin:2026bdd} established this for Ramanujan's order $5$ mock theta function, and \cite{Costin:2026new} establishes it for Ramanujan's order $3$ pair $(f, \omega)$ and the mock theta functions of orders $7$ and $11$. 

    Note that the dual false theta functions in the Li--Schwagenscheidt family are of the form $(q\text{-series with positive exponents})/ \eta^{3}(\tau)$. Consequently, $\psi_{p,a}^{\vee, \eta^{3}}$ cannot have a pole larger than $q^{- \frac{1}{8}}$, so 
    $$ c_{\text{eff}}(\psi_{p,a}^{\vee, \eta^{3}} ) \leq 3 .$$ 
    Similarly, for the dual false theta functions in the special arithmetic family, 
    $$ c_{\text{eff}}(\psi_{p,a}^{\vee, \eta} ) \leq 1 .$$
    
    \item \textbf{Denominators of coefficients}: The coefficients of the dual false theta functions with shadow $\tfrac{1}{\sqrt{p}}\theta_{p, a}^{1}$ obtained in this article are rational numbers with bounded denominators. One could therefore single out a dual false theta function by requiring coefficients with minimal denominators, or with denominator 1. Since the coefficients of dual false theta functions from the special arithmetic family are all integers, they satisfy the denominator $1$ characterization. 
    
    \item \textbf{Neumann invariance}: Extending the techniques used to construct the $\hat{Z}$ invariant for Seifert manifolds to general plumbed manifolds remains an open problem. Any framework that computes the $\hat{Z}$ invariant for positive-definite plumbed manifolds should be Neumann invariant. That is, plumbing graphs related by Neumann moves \cite{Neu} should yield the same $q$-series. This topological constraint could serve to single out a dual false theta function.
\end{itemize}
These criteria are not mutually exclusive, and it would be interesting to understand whether they select the same dual false theta function. We leave these questions for future work.

%%%%%%%%%%%%%%%%%%%%%%%%%%%%%%%%%%%%%%%

\section*{Acknowledgments}
We thank Miranda Cheng, John Chae, Gerald Dunne, Pedro Guicardi, Sergei Gukov, Oguz Onner, Paul Orland, Davide Passaro, Lara San Martin Suarez and Josef Svoboda for very helpful discussions. While writing this manuscript, we learned that \cite{Costin:2026new} was soon to be published, and we are especially grateful to Ovidiu Costin, Gerald Dunne, and Ali Saraeb for sharing their draft and agreeing to coordinate publications. This work is supported by the Walter Burke Institute for Theoretical Physics, the U.S. Department of Energy, Office of Science, Office of High Energy Physics, under Award No. DE-SC0011632 and the Simons Collaboration Grant on ``New Structures in Low-Dimensional Topology''. We acknowledge the use of Claude Opus 5.5 to generate the Mathematica script used in Section \ref{examplesofdualfalse} for computing the $q$-series expansion. All formulas and results are our own, and we verified the code logic and the accuracy of the resulting series.

\printbibliography[heading=bibintoc]

\end{document}